\documentclass[a4paper,twocolumn,11pt,unpublished]{quantumarticle}
\pdfoutput=1
\usepackage[T1]{fontenc}
\usepackage[utf8]{inputenc}
\usepackage[english]{babel}
\usepackage{amsmath,amssymb,amsthm,mathtools}
\usepackage{graphicx,booktabs}
\usepackage[numbers,sort&compress]{natbib}
\usepackage{microtype}
\usepackage{hyperref}
\newcommand{\one}{\mathbf{1}}
\newcommand{\R}{\mathbb{R}}
\newcommand{\E}{\mathbb{E}}
\newcommand{\TV}{\operatorname{TV}}
\newcommand{\sech}{\operatorname{sech}}
\newcommand{\spa}{\operatorname{span}}
\newcommand{\cone}{\operatorname{cone}}
\newcommand{\Tr}{\operatorname{Tr}}
\newtheorem{theorem}{Theorem}[section]
\newtheorem{proposition}[theorem]{Proposition}

\newlength{\manuscriptcolumnwidth}
\AtBeginDocument{%
  \normalfont
  \setlength{\manuscriptcolumnwidth}{\dimexpr(\textwidth-\columnsep)/2\relax}%
  \hypersetup{allcolors=quantumviolet}%
  \fancyfoot[C]{}%
  \fancyfoot[R]{\sffamily\textcolor{quantumviolet}{\thepage}}%
}
\newcommand{\appendixfigure}[3]{%
  \begin{minipage}[t]{\manuscriptcolumnwidth}
    \vspace{0pt}\centering
    \includegraphics[width=\linewidth]{figures/#1.pdf}
    \captionof{figure}{#2}\label{#3}
  \end{minipage}%
}
\begin{document}
\title{\texorpdfstring{\textcolor{quantumviolet}{Hidden-state updates and observable-record composition in retrocausal models}}{Hidden-state updates and observable-record composition in retrocausal models}}
\author{D.M. Theshan N. Weerasinghe}
\email[\\]{theshan.dissanayakemudiyanselage1@monash.edu}
\affiliation{Department of Electrical and Computer Systems Engineering, Faculty of Engineering, Monash University, Clayton, Australia}
\date{}
\begin{abstract}
A hidden-variable model can reproduce context-independent individual readouts while failing to preserve observable records when devices are connected or reused. We distinguish two composition requirements. For a finite family of nonnegative instrument matrices, an invariant space generated by normalization defects characterizes unit continuation mass under every finite adaptive policy. Its annihilator defines the maximal admissible positive cone closed under outcome-conditioned updates. A separate observation space characterizes indistinguishability of context-dependent hidden preparations under joint readout. A finite-state example passes every fixed sequence but fails under outcome-dependent stopping. For a wrapped-Cauchy spin source with an explicitly stipulated unchanged-source passive extension, two individually fair probes acquire a setting-dependent agreement probability with contrast $\epsilon^2\sech(2g)/4$. More generally, protection under one and two identical uses forces a continuous binary passive response to be constant; measurable two-valued exceptions remain possible. Exact rational calculations, exhaustive finite-policy experiments, continuous source sampling, and quadrature test the stated examples. The criteria apply to specified instrument families and accessible records; they do not construct a microscopic detector or establish consistency under unrestricted network composition.
\end{abstract}
\maketitle

\section{Introduction}
An outcome distribution that is insensitive to a later experimental choice need not remain insensitive when its detector is reused. The first interaction changes the hidden state presented to the next interaction, and two individually uninformative outputs can contain information in their correlations. A composition criterion must therefore specify both the hidden update and the joint record that an observer can access.

This issue is particularly relevant to retrocausal models, where the hidden distribution can depend on later measurement settings. Such dependence is not itself an observable signal. The operational question is whether an earlier accessible record depends on a later freely selectable choice after all unobserved outcomes have been summed. In this work an observable record includes all declared outputs, including retained pointer, timing, controller, memory, and environment records. No postselection is used.

The distinction between hidden dependence and operational independence also appears in causal-model analyses of quantum correlations. Wood and Spekkens show that reproducing nonsignalling Bell correlations within their classical directed-acyclic causal-model framework requires fine-tuning~\cite{WoodSpekkens2015}. Leifer and Pusey distinguish operational time symmetry from additional assumptions about the underlying ontology~\cite{LeiferPusey2017}. The present analysis examines whether a specified model preserves operational independence under composition of its readouts.

Two requirements arise. Appending an apparatus must not reweight an existing archive through a context-dependent continuation mass. In addition, readout operations must not distinguish hidden preparations associated with different future contexts. The first requirement concerns normalization; the second concerns observable equivalence. We formulate both on a fixed hidden state containing every variable required for subsequent transition weights. Archived labels are immutable classical records, although they may control later settings. Describing each outcome together with its state update follows the instrument approach to sequential measurement~\cite{DaviesLewis1970}. Our finite-state kernels extend this instrument representation to nonnegative raw weights whose normalization is tested on a specified class of hidden preparations.

The finite-state analysis uses familiar invariant-span methods. Backward spaces of word effects are standard in equivalence algorithms for weighted and probabilistic automata~\cite{Kiefer2011}. Here the observation space is such a word-effect space, while a defect-seeded invariant space isolates the normalization requirement relevant to adaptive continuation. The contribution is their operational separation, the associated admissible-cone characterization, and their application to reuse, reset, readable refinements, and a repeated-probe obstruction.

The continuous example uses the endpoint weights of the Schulman-type spin construction discussed by Almada et al.~\cite{Almada2016}. Their construction treats measurement as a boundary condition; an additional intermediate measurement can change the experiment. We therefore state the unchanged-source passive-probe extension as an additional assumption. The resulting obstruction tests that extension. Wharton and Argaman discuss adaptive conditional reconstruction~\cite{Wharton2020}, which likewise must be distinguished from multiplying one fixed collection of raw kernels under every controller policy.

We permit arbitrary finite causal decision trees, with settings determined only by records already available and with stopping independently available on every branch. All outcomes, including no-click events and readable reset-environment outputs, belong to the declared instrument. These conditions define the mathematical class being tested; they do not assert physical realizability of every matrix family. Sections~\ref{sec:finite} and~\ref{sec:spin} give the finite-state criteria and continuous obstruction, respectively. Section~\ref{sec:experiments} reports numerical experiments, and Section~\ref{sec:discussion} discusses quantum calibration and the remaining physical assumptions.

\section{Composition with hidden-state updates}\label{sec:finite}
\subsection{Adaptive normalization}
Nonnegative hidden measures are represented by row vectors, and hidden-state functions by column vectors. On $n$ hidden states, setting $u$ has finitely many outcome matrices
\begin{equation}
 K_{u,o}\in\R_{\geq0}^{n\times n},\qquad T_u=\sum_o K_{u,o}.
 \label{eq:instrument}
\end{equation}
The entry $K_{u,o}(i,j)$ is the raw weight of outcome $o$ and transition $i\to j$. It need not be stochastic. A branch sends $\mu$ to $\mu K_{u,o}$; when its mass is positive the conditional update is
\begin{equation}
 \mu'_{u,o}=\frac{\mu K_{u,o}}{\mu K_{u,o}\one}.
 \label{eq:update}
\end{equation}
For a finite adaptive policy $\pi$, define its continuation effect recursively by
\begin{equation}
 e_{\rm stop}=\one,\qquad e_\pi=\sum_o K_{u,o}e_{\pi_o}.
 \label{eq:effect}
\end{equation}
Different outcomes may select different subtrees, including stopping. Its integrated continuation mass is $\mu e_\pi$.

Let an earlier archive $H=h$ have baseline probability $p_h>0$ and conditional normalized hidden law $\mu_h$. Under multiplication of raw history weights followed by one global normalization, branch-dependent continuations give
\begin{equation}
 P_\pi(h)=\frac{p_h\mu_he_{\pi_h}}{\sum_jp_j\mu_je_{\pi_j}}.
 \label{eq:archive}
\end{equation}
A common continuation mass suffices for one specified policy family. The following statement identifies when unit mass is necessary.

\begin{proposition}\label{prop:unit}
Assume the archive has at least two supported values and the controller can apply any allowed continuation on one selected value while stopping on all others. Its distribution is preserved under all such controls if and only if $\mu_he_\pi=1$ for every $h$ and $\pi$.
\end{proposition}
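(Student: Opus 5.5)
The plan is to prove the two directions separately, working directly with the globally normalized archive law \eqref{eq:archive}. Throughout, $e_{\rm stop}=\one$ and each $\mu_h$ is normalized, so $\mu_h e_{\rm stop}=1$ for every $h$.

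\emph{Sufficiency.} Suppose $\mu_he_\pi=1$ for every $h$ and every allowed $\pi$. Then for any branch-dependent assignment $h\mapsto\pi_h$ each factor $\mu_he_{\pi_h}$ equals one. The numerator of \eqref{eq:archive} becomes $p_h$, and the denominator becomes $\sum_jp_j=1$. Hence $P_\pi(h)=p_h$. This covers the restricted controls of the statement and in fact every branch-dependent control.

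\emph{Necessity.} Fix a supported value $h$ and an allowed policy $\pi$, and write $m=\mu_he_\pi\ge0$. The policy is nonnegative because the kernels are nonnegative. Apply the control that runs $\pi$ on $h$ and stops on every other value. By \eqref{eq:archive} the denominator is $p_hm+(1-p_h)$, and preservation requires
\begin{equation*}
 \frac{p_hm}{p_hm+1-p_h}=p_h .
\end{equation*}
If $m=0$ the left side is $0\neq p_h$, so $m=0$ is excluded. Otherwise, cross-multiplying gives $p_hm=p_h^2m+p_h(1-p_h)$, that is, $p_h(1-p_h)(m-1)=0$. Since $p_h>0$, and $p_h<1$ because the archive has at least two supported values, we get $m=1$. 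Running this argument for every supported $h$ and every $\pi$ gives the claim.

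The only delicate step is using the two-support hypothesis to guarantee $1-p_h>0$. This is exactly why a single-valued archive is excluded, since global renormalization then erases any mass. A secondary point is the zero-mass case $m=0$, where \eqref{eq:archive} is still well defined because the other supported values contribute $1-p_h>0$ to the denominator.

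Finally, note that each $\mu_h$ enters only through the scalar $\mu_he_\pi$. The conclusion is therefore a statement about continuation masses on the given conditional laws, as stated, with no requirement on the hidden states themselves.
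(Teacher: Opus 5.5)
Your proof is correct and follows the paper's argument. Sufficiency comes directly from \eqref{eq:archive}, and necessity comes from running $\pi$ on one supported value while stopping on the others, then using $0<p_h<1$ to solve $p_hm/(p_hm+1-p_h)=p_h$, which forces $m=1$. Your separate treatment of $m=0$ is harmless but redundant, since the cross-multiplied identity $p_h(1-p_h)(m-1)=0$ already rules it out.
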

\begin{proof}
Sufficiency follows from Eq.~\eqref{eq:archive}. For necessity, fix $h$, set $m=\mu_he_\pi$, and assign the stopping operation to every other archive value. The identity $p_hm/(p_hm+1-p_h)=p_h$, with $0<p_h<1$, forces $m=1$.
\end{proof}
Stopping has its specified physical weight; it cannot be introduced by changing normalization conventions. Without branchwise stopping common nonunit factors can cancel. An archive with only one supported value imposes no such necessity.

Define the one-step defects and their invariant span by
\begin{align}
 d_u&=T_u\one-\one,\qquad W_0=\{0\},\nonumber\\
 W_{k+1}&=\spa\bigl(\{d_u\}_u\cup\{K_{u,o}w:w\in W_k\}_{u,o}\bigr),\nonumber\\
 W&=\bigcup_{k\geq0}W_k.
 \label{eq:W}
\end{align}
These nested spaces stabilize in finite dimension. Defect vectors can have either sign; positivity is imposed on the transition weights.

\begin{theorem}[Adaptive mass criterion]\label{thm:mass}
For every nonnegative row measure $\mu$,
\begin{equation}
 \mu e_\pi=\mu\one\ \text{for all finite }\pi
 \quad\Longleftrightarrow\quad \mu W=0.
 \label{eq:masscriterion}
\end{equation}
At each depth,
\begin{equation}
 W_k=\spa\{e_\pi-\one:\operatorname{depth}(\pi)\leq k\}.
 \label{eq:depthspan}
\end{equation}
\end{theorem}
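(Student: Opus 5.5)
I will prove the depth identity \eqref{eq:depthspan} first and then derive the criterion \eqref{eq:masscriterion} from it. Write $V_k=\spa\{e_\pi-\one:\operatorname{depth}(\pi)\leq k\}$. The stopping policy has depth $0$ and $e_{\rm stop}-\one=0$, so $V_0=\{0\}=W_0$. The argument is an induction on $k$, with the two inclusions proved separately. The key identity comes from expanding the recursion \eqref{eq:effect}. For a policy of depth $\leq k+1$ whose root setting is $u$ and whose subpolicies are $\pi_o$,
\begin{equation*}
 e_\pi-\one=\sum_o K_{u,o}e_{\pi_o}-\one=d_u+\sum_o K_{u,o}\bigl(e_{\pi_o}-\one\bigr),
\end{equation*}
where I used $\sum_oK_{u,o}\one=T_u\one$.

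\emph{Inclusion $V_{k+1}\subseteq W_{k+1}$.} Each subpolicy $\pi_o$ has depth $\leq k$. By the inductive hypothesis $e_{\pi_o}-\one\in V_k=W_k$. Hence the right-hand side of the identity is $d_u$ plus a sum of vectors of the form $K_{u,o}w$ with $w\in W_k$, and so lies in $W_{k+1}$.

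\emph{Inclusion $W_{k+1}\subseteq V_{k+1}$.} This is the step that needs care, because $W_{k+1}$ is generated by the isolated vectors $K_{u,o}w$, whereas a policy always applies all outcomes of a setting at once. I will show each generator lies in $V_{k+1}$.
\begin{itemize}
\item Taking all $\pi_o={\rm stop}$ in the identity gives $d_u\in V_1\subseteq V_{k+1}$.
\item For $K_{u,o}w$, linearity lets me take $w=e_\sigma-\one$ with $\operatorname{depth}(\sigma)\leq k$. Consider the policy that applies $u$, continues with $\sigma$ on outcome $o$, and stops on every other outcome. This is exactly where branchwise stopping is used. Its effect minus $\one$ equals $d_u+K_{u,o}(e_\sigma-\one)$.
\item Subtracting $d_u\in V_{k+1}$ then gives $K_{u,o}(e_\sigma-\one)\in V_{k+1}$.
\end{itemize}
Together with $V_k\subseteq V_{k+1}$ (which holds because depth $\leq k$ implies depth $\leq k+1$), this closes the induction. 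Taking unions over $k$, $W=\spa\{e_\pi-\one:\pi \text{ finite}\}$.

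The criterion \eqref{eq:masscriterion} is then immediate. The condition $\mu e_\pi=\mu\one$ for all $\pi$ means $\mu(e_\pi-\one)=0$ for every spanning vector of $W$. By linearity this is equivalent to $\mu W=0$, in both directions.

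Nonnegativity of $\mu$ is not needed for this equivalence. Stabilization of $W_k$ follows from the nested chain in $\R^n$, and is not needed for the equivalence either. I expect the main obstacle to be only the bookkeeping in the second inclusion: isolating a single-outcome term by combining the policy above with the all-stop policy at the same setting.
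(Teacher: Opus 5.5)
Your proof is correct and is essentially the paper's argument. The recursion identity $e_\pi-\one=d_u+\sum_oK_{u,o}(e_{\pi_o}-\one)$ places every depth-$(k+1)$ policy defect in $W_{k+1}$. The all-stop policy, and the policy that continues with $\sigma$ only after outcome $o$ (minus the all-stop effect), recover the generators $d_u$ and $K_{u,o}(e_\sigma-\one)$; pairing the union with $\mu$ then gives the criterion. One cosmetic point: the paper already uses $V_k$ for the observation spaces of Eq.~\eqref{eq:V}, so a different name for your policy-defect spans would avoid a clash.
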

\begin{proof}
The depth-zero statement is immediate. At the next depth,
\begin{equation}
 e_\pi-\one=d_u+\sum_oK_{u,o}(e_{\pi_o}-\one).
 \label{eq:induction}
\end{equation}
Thus every policy defect lies in the space in Eq.~\eqref{eq:W}. Conversely $d_u$ is the defect of applying $u$ and stopping after every outcome. For a fixed outcome $o$, assign a subtree $\sigma$ to that outcome and stop after every other outcome. Subtracting the effect of applying $u$ and stopping everywhere gives $K_{u,o}(e_\sigma-\one)$. Differences of effects are differences of defects. Induction proves Eq.~\eqref{eq:depthspan}; pairing its union with $\mu$ proves the equivalence.
\end{proof}
The admissible cone and its normalized slice are
\begin{align}
 C_* &=\{\mu\geq0:\mu W=0\},\nonumber\\
 C_{*,1}&=\{\mu\in C_*:\mu\one=1\}.
 \label{eq:cone}
\end{align}
The normalized slice can be empty.

\begin{theorem}[Closure and maximality]\label{thm:cone}
For $\mu\in C_*$, every outcome image $\mu K_{u,o}$ lies in $C_*$ and $\mu T_u\one=\mu\one$. Every positive-mass conditional update therefore belongs to $C_{*,1}$. Moreover, any cone $C$ of nonnegative measures satisfying
\begin{equation}
 CK_{u,o}\subseteq C,\qquad \mu d_u=0\quad(\mu\in C)
 \label{eq:closedcone}
\end{equation}
is contained in $C_*$.
\end{theorem}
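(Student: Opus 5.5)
The plan is to work directly from the recursive definition of $W$ in Eq.~\eqref{eq:W}, using two structural facts: $W$ is invariant under every $K_{u,o}$, and $W$ contains every defect $d_u$. Both follow once the chain stabilizes. In finite dimension the nested spaces satisfy $W=W_N$ for some $N$, and then $W_{N+1}=W_N$ gives $K_{u,o}W\subseteq W$ and $d_u\in W$ for all $u,o$.

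\emph{Closure.} Take $\mu\in C_*$, so $\mu\geq0$ and $\mu w=0$ for all $w\in W$.
\begin{itemize}
\item The image $\mu K_{u,o}$ is nonnegative, because both $\mu$ and $K_{u,o}$ are entrywise nonnegative.
\item For any $w\in W$ we have $(\mu K_{u,o})w=\mu(K_{u,o}w)=0$, since $K_{u,o}w\in W$. Hence $\mu K_{u,o}\in C_*$.
\item Since $d_u\in W$, we get $\mu d_u=0$, that is, $\mu T_u\one=\mu\one$.
\end{itemize}
For the conditional update in Eq.~\eqref{eq:update}, $C_*$ is a cone, so dividing $\mu K_{u,o}$ by the positive scalar $\mu K_{u,o}\one$ stays in $C_*$. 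The result has unit total mass, so it lies in $C_{*,1}$. This does not even need $\mu\one=1$.

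\emph{Maximality.} Let $C$ satisfy Eq.~\eqref{eq:closedcone}. I will show by induction on $k$ that $\mu W_k=0$ for every $\mu\in C$.
\begin{itemize}
\item The case $k=0$ is trivial, since $W_0=\{0\}$.
\item Suppose $\mu W_k=0$ for all $\mu\in C$. The space $W_{k+1}$ is spanned by the vectors $d_u$ and $K_{u,o}w$ with $w\in W_k$.
\item The hypothesis on $C$ gives $\mu d_u=0$ directly.
\item For the other generators, $\mu K_{u,o}w=(\mu K_{u,o})w=0$, because $\mu K_{u,o}\in C$ and the inductive hypothesis applies to every element of $C$.
\item By linearity, $\mu W_{k+1}=0$.
\end{itemize}
Taking the union over $k$ gives $\mu W=0$. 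Each $\mu\in C$ is also nonnegative, so $C\subseteq C_*$.

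The one point that needs care is the order of quantifiers in the induction. The hypothesis must be stated for all $\mu\in C$ at once, so that it can be applied to the image $\mu K_{u,o}$ rather than to $\mu$ itself. The same stabilization argument justifies treating $W$ as a single invariant subspace in the closure step.

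As an alternative route, maximality could also be obtained from Theorem~\ref{thm:mass}. Iterating the closure of $C$ and $\mu d_u=0$ along any policy gives $\mu e_\pi=\mu\one$. Theorem~\ref{thm:mass} then yields $\mu W=0$. The direct induction above is shorter, so I would use it.
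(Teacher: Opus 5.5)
Your proof is correct and follows the paper's argument. Closure comes from $K_{u,o}W\subseteq W$, $d_u\in W$ and the cone property, and maximality comes from showing that each $\mu\in C$ annihilates every generator of $W$. Your induction on $W_k$, with the hypothesis quantified over all of $C$, is the depth-indexed form of the paper's step: every branch image $\mu K_{u_1,o_1}\cdots K_{u_m,o_m}$ lies in $C$ and so annihilates each $d_v$, and these iterated images of defects span $W$.
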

\begin{proof}
For $w\in W$, invariance gives $(\mu K_{u,o})w=\mu(K_{u,o}w)=0$; nonnegativity follows from the matrix entries. The mass identity is $\mu d_u=0$. Normalization uses the conic property. If Eq.~\eqref{eq:closedcone} holds, every successive branch image of $\mu\in C$ annihilates each defect. Hence $\mu$ annihilates every iterated image $K_{u_1,o_1}\cdots K_{u_m,o_m}d_v$, whose span is $W$.
\end{proof}
Applying Theorem~\ref{thm:mass} to each remaining subtree protects every generated record prefix, as well as the original archive, when its conditional law is admissible. Pointwise stochasticity, $T_u\one=\one$, is the special case $W=0$ with all nonnegative laws admissible.

A basis is computed by retaining independent defects and repeatedly applying every outcome matrix to newly admitted vectors. At most $n$ vectors are admitted. If the defects vanish, the process stops immediately; otherwise every nonstationary step increases dimension. Consequently $W_n=W$, and any failure for a specified $\mu$ has a policy witness of depth at most $n$. The witness may require different actions after different outcomes. Exact rational rank tests provide finite certificates for rational matrices; floating-point rank thresholds alone do not. This dimension bound presumes a fixed hidden-state description.

\subsection{Record indistinguishability}
Suppose normalized hidden laws $\rho_y$ depend on a later context $y$, and the same readout instruments are available in every context. Define
\begin{align}
 V_0&=\spa\{\one\},\nonumber\\
 V_{k+1}&=\spa\bigl(V_k\cup\{K_{u,o}v:v\in V_k\}_{u,o}\bigr),\nonumber\\
 V&=\bigcup_{k\geq0}V_k.
 \label{eq:V}
\end{align}
This observation space contains all branch effects. Unlike $W$, it starts from the unit effect. In fact $W\subseteq V$, since every defect is a sum of one-step branch effects minus $\one$; their distinct seeds nevertheless give distinct operational tests.

\begin{theorem}[Record indistinguishability]\label{thm:observe}
Assume $\rho_y\in C_{*,1}$ for every context. Joint record distributions agree under every finite adaptive readout policy if and only if
\begin{equation}
 (\rho_y-\rho_{y'})V=0\qquad\text{for every }y,y'.
 \label{eq:observe}
\end{equation}
\end{theorem}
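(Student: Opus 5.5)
The plan is to express every joint record probability under a finite adaptive readout policy as a pairing of the hidden law with a branch effect. Then I will show that the span of these branch effects is exactly $V$, mirroring the proof of Theorem~\ref{thm:mass}.

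For a policy $\pi$ and a complete record $r=(u_1,o_1,\dots,u_m,o_m)$ consistent with $\pi$, the probability of $r$ in context $y$ is $\rho_yK_{u_1,o_1}\cdots K_{u_m,o_m}\one$. Here the settings $u_k$ are determined by $\pi$ from the earlier outcomes.

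First I must check that this raw weight is the probability, with no global renormalization. Since $\rho_y\in C_{*,1}$, Theorem~\ref{thm:mass} gives $\rho_ye_\pi=1$ for every finite $\pi$. Theorem~\ref{thm:cone} ensures that each branch image stays in $C_*$, so every intermediate normalization is consistent. Hence the raw branch weights sum to one over all terminal records, and the conditional updates in Eq.~\eqref{eq:update} compose to these products. This admissibility assumption is what makes the joint distribution linear in $\rho_y$; without it, Eq.~\eqref{eq:archive}-type normalization would introduce nonlinear ratios. I expect this step to be the main conceptual point, though it is short given the earlier results.

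\emph{Sufficiency.} Each word effect $K_{u_1,o_1}\cdots K_{u_m,o_m}\one$ lies in $V_m\subseteq V$, by induction on Eq.~\eqref{eq:V}. So if $(\rho_y-\rho_{y'})V=0$, every record probability agrees for every policy.

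\emph{Necessity.} I will show that $V$ is spanned by word effects $K_{u_1,o_1}\cdots K_{u_m,o_m}\one$, each of which is realized as a record probability of some policy. The spanning claim follows by induction. $V_0$ is spanned by $\one$, the effect of stopping immediately. $V_{k+1}$ is spanned by $V_k$ together with $K_{u,o}$ applied to spanning words of $V_k$, and these are again words. Given a word, take the deterministic sequential policy that applies $u_1,\dots,u_m$ in order regardless of outcomes, or stops off-branch; the branch-dependent stopping is allowed. The probability of the record $(u_1,o_1,\dots,u_m,o_m)$ under this policy is exactly $\rho_y$ paired with that word effect. Equality of joint record distributions therefore gives $(\rho_y-\rho_{y'})$ annihilating every spanning word, hence all of $V$ by linearity.

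Fixed sequences already suffice here. This is worth noting in contrast with the mass criterion, where adaptivity mattered.

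A minor obstacle is records of zero probability, where the conditional update is undefined. I will handle these by working throughout with raw branch weights, which are defined regardless. Finally, the dimension argument from Theorem~\ref{thm:mass} gives $V=V_{n-1}$, so any distinguishing policy can be taken of depth at most $n-1$.
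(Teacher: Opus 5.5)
Your proposal is correct and follows essentially the paper's proof: admissibility makes the raw word weights $\rho_yK_w\one$ the record probabilities, word effects span $V$ so fixed sequences give necessity, and each adaptive leaf is a word so the condition gives sufficiency leaf by leaf. One small slip: the bound $V=V_{n-1}$ does not come from Theorem~\ref{thm:mass} itself but from the analogous dimension count, which starts from the one-dimensional space $V_0$.
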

\begin{proof}
For a branch word $w=(u_1,o_1)\cdots(u_m,o_m)$, write $K_w=K_{u_1,o_1}\cdots K_{u_m,o_m}$. Its probability is $\rho_yK_w\one$. Word effects span $V$, so equality of all fixed-sequence record probabilities implies Eq.~\eqref{eq:observe}. Conversely each adaptive leaf is a word, and the condition gives equality leaf by leaf and hence for every event. Membership in $C_{*,1}$ makes these raw branch weights normalized probabilities.
\end{proof}
Without the admissible-slice assumption the word argument characterizes equality of raw weights, which need not coincide with equality after context-dependent global normalization. Since $V_0$ has dimension one, $V$ stabilizes by word length $n-1$; a nonzero source difference on $V$ has a distinguishing word no longer than this.

For example, on states $(s,t)=(++,+-,-+,--)$ take
\begin{equation}
 \rho_0=(1/2,0,0,1/2),\quad \rho_1=(0,1/2,1/2,0).
 \label{eq:fairbits}
\end{equation}
Either bit is read using diagonal indicator matrices that leave the state unchanged. Each instrument sums to the identity, so $W=0$. Either bit is fair in either context, while $P(s=t)$ equals one in context 0 and zero in context 1. The one-step space has dimension three; its closure has dimension four and detects this difference.

An environmental refinement replaces $K_{u,o}$ by matrices $K_{u,o,e}$ satisfying $\sum_eK_{u,o,e}=K_{u,o}$. Invariance after summing $e$ does not imply invariance under each refined matrix. Both spaces must be recomputed when $e$ becomes readable. The accessible record and the controller's access to that record are therefore part of the specification, rather than properties inferred from a coarse instrument.

\subsection{Composition and stability}
Let $q=(1/2,1/2)$ and consider the one-outcome matrix
\begin{equation}
 D=\begin{pmatrix}3/2&0\\0&1/2\end{pmatrix}.
 \label{eq:D}
\end{equation}
Although $qD\one=1$, its normalized output is $(3/4,1/4)$, and $qD^2\one=5/4$. Two uses on branch 0 of an independent fair archive, with stopping on branch 1, give $P(H=0)=5/9$. The vectors $d=(1/2,-1/2)^T$ and $Dd=(3/4,-1/4)^T$ span $\R^2$, excluding $q$ from an indefinitely reusable class.

A nonstochastic family can nevertheless close on a nontrivial cone. For nonnegative columns $r_{u,o}$, define
\begin{equation}
 K_{u,o}=r_{u,o}q,\qquad q\sum_or_{u,o}=1.
 \label{eq:rankone}
\end{equation}
Every positive-mass branch resets its hidden law to $q$. In the explicit two-setting example,
\begin{align}
 r_{A,0}&=(1,1/4)^T,&r_{A,1}&=(1/2,1/4)^T,\nonumber\\
 r_{B,0}&=(1/4,1)^T,&r_{B,1}&=(1/4,1/2)^T.
 \label{eq:rcolumns}
\end{align}
Both settings have outcome probabilities $(5/8,3/8)$ on $q$, but $T_A\one=(3/2,1/2)^T$ is nonconstant. Here $W=\spa\{(1,-1)^T\}$ and $C_* =\cone\{q\}$. This is an algebraic existence example, not a microscopic derivation of a detector or of Bell correlations.

The stochastic reset
\begin{equation}
 S=\begin{pmatrix}1&0\\1&0\end{pmatrix}
 \label{eq:reset}
\end{equation}
sends $q$ outside the admissible slice and gives $qST_A\one=3/2$. Reset followed by $A$ on one branch of a fair archive yields $P(H=0)=3/5$. The combined defect space has dimension two. Refining $S$ with an environmental output that records the incoming state preserves this failure: the refined matrices sum to $S$ and the combined space remains two-dimensional. A normalized reset appended last preserves total mass; the failure requires subsequent reuse by a nonstochastic device. If every later device is pointwise stochastic this mass-reweighting mechanism is absent.

Adaptive control can detect failures that no fixed sequence detects. Add the readable identity instrument $Q_0=\operatorname{diag}(1,0)$, $Q_1=\operatorname{diag}(0,1)$ to setting $A$. Since $q(Q_0+Q_1)=q$ and $qT_A=q$, every fixed sequence of these settings has mass one at every length. Nevertheless, reading $Q$, applying $A$ after outcome 0, and stopping after outcome 1 gives
\begin{equation}
 qQ_0T_A\one+qQ_1\one=\frac54.
 \label{eq:adaptivewitness}
\end{equation}
Reversing the stopping rule gives $3/4$. The outcome-wise defect space has dimension two. Thus averaging over a readable refinement before analyzing the controller can miss an operationally relevant failure.

For independent systems with admissible cones $C_A,C_B$, define
\begin{equation}
 C_{AB}^{\rm prod}=\cone\{\mu\otimes\nu:\mu\in C_A,\nu\in C_B\}.
 \label{eq:prodcone}
\end{equation}
Local outcome maps $K_{u,o}\otimes I$ and $I\otimes L_{v,r}$, as well as product instruments, preserve this cone. Each product term preserves total mass; finite sums and classical feed-forward therefore also preserve it. This establishes closure for independent parallel composition and mixtures of admissible products under the stated local operations.

Admissible marginals alone do not suffice. The joint law with $\nu(0,0)=\nu(1,1)=1/2$ has both marginals $q$, yet
\begin{equation}
 \nu(T_A\one\otimes T_A\one)
 =\tfrac12[(3/2)^2+(1/2)^2]=\tfrac54,
 \label{eq:correlated}
\end{equation}
whereas $q\otimes q$ gives one. Correlated environments must be retained in the joint state. General correlated inputs require $W_{AB}$ and, if preparations depend on future contexts, $V_{AB}$ for the actual lifted and joint kernels. An entanglement-swapping operation requires its own joint kernel, outcome set, source assignment, and record specification; the criteria test such data once supplied.

For finite-horizon error control, suppose every archive-branch mass lies in $[a,b]$ with $0<a\leq b<\infty$. For $a<b$, writing $M=\sum_hp_hm_h$, convexity of $|m-M|$ on $[a,b]$ gives
\begin{align}
 \TV(P_\pi,p)&\leq\frac{(b-M)(M-a)}{M(b-a)}\nonumber\\
 &\leq\frac{\sqrt b-\sqrt a}{\sqrt b+\sqrt a}.
 \label{eq:TVbound}
\end{align}
Here $\TV(P,p)=\frac12\sum_h|P(h)-p(h)|$. The second maximum is at $M=\sqrt{ab}$; a two-point distribution assigning probability $\sqrt a/(\sqrt a+\sqrt b)$ to mass $b$ attains it. For $a=b$ the error is zero.

If every reachable one-step conditional total mass is independently bounded by $1\pm\delta$, where $0\leq\delta<1$, induction on positive decision trees bounds every depth-$N$ mass by $(1\pm\delta)^N$, with stopped branches assigned mass one. Hence
\begin{equation}
 \TV(P_\pi,p)\leq\tanh\!\left[\frac N4\log\frac{1+\delta}{1-\delta}\right].
 \label{eq:finiteerror}
\end{equation}
This premise concerns all reachable conditional states. Initial means or small basis residuals alone do not provide the requisite uniform reachability and amplification control.

\section{A passive-readout obstruction for a spin source}\label{sec:spin}
\subsection{Source and probe statistics}
Consider antiparallel hidden orientations $\lambda$ and $\lambda+\pi$, with $\lambda\in[0,2\pi)$, planar terminal axes $x,y$, and terminal outcomes $a,b\in\{+1,-1\}$. For $g>0$ define
\begin{align}
 p_g(t)&=\frac{\sinh g}{2\pi(\cosh g-\cos t)},\nonumber\\
 R_g(t)&=p_g(t)+p_g(t+\pi)\nonumber\\
 &=\frac{\sinh(2g)}{\pi[\cosh(2g)-\cos(2t)]}.
 \label{eq:wrapped}
\end{align}
The density $p_g$ integrates to one; $R_g$ integrates to two. Define $\theta_{z,a}=z+\pi\mathbf{1}_{a=-1}$. The endpoint-weight prescription~\cite{Almada2016} assigns the joint history density a weight proportional to
\begin{equation}
 p_{g_A}(\theta_{x,a}-\lambda)\,
 p_{g_B}(\theta_{y,b}-\lambda-\pi).
 \label{eq:endpoints}
\end{equation}
Summing both terminal outcomes produces
\begin{align}
 \rho_{xy}(\lambda)&=\frac{R_{g_A}(\lambda-x)R_{g_B}(\lambda-y)}{Z_{xy}},\nonumber\\
 Z_{xy}&=2R_{g_A+g_B}(x-y).
 \label{eq:rho}
\end{align}
This positive, $\pi$-periodic density is a hidden-angle marginal conditioned on the later settings. It is not a future-independent preparation. Simultaneously shifting $\lambda$ by $\pi$ and reversing both terminal outcomes preserves Eq.~\eqref{eq:endpoints}, making either terminal marginal fair.

The stipulated passive extension leaves both $\lambda$ and Eq.~\eqref{eq:rho} unchanged when probes are inserted. It uses fresh randomness independent between probes conditional on $\lambda$, with binary response
\begin{equation}
 L_\epsilon(r\mid\lambda)=\frac{1+r\epsilon\cos\lambda}{2},
 \qquad 0<\epsilon\leq1.
 \label{eq:probe}
\end{equation}
Passivity alone does not entail these source and noise assumptions. Each probe is fair because $\cos\lambda$ is antipodally odd. For two identical probes the joint earlier record is
\begin{equation}
 P(r,s\mid x,y)=\int_0^{2\pi}\rho_{xy}(\lambda)
 L_\epsilon(r\mid\lambda)L_\epsilon(s\mid\lambda)\,d\lambda.
 \label{eq:pairintegral}
\end{equation}
For equal widths $g_A=g_B=g$, the second circular moment is
\begin{equation}
 \E_{xy}e^{2i\lambda}=\frac{e^{2ix}+e^{2iy}}{2\cosh(2g)}.
 \label{eq:moment}
\end{equation}
Appendix~\ref{app:moment} derives this identity and its unequal-width extension directly from the endpoint kernels. At $x=0$,
\begin{equation}
 \E_{0y}\cos^2\lambda=\tfrac12[1+\sech(2g)\cos^2y].
 \label{eq:cos2}
\end{equation}
Consequently $P(r,s)=\frac14[1+rs\epsilon^2\E\cos^2\lambda]$.

\begin{proposition}\label{prop:spin}
In the stipulated equal-width extension with $x=0$,
\begin{align}
 P(r=+\mid y)&=P(s=+\mid y)=\tfrac12,\nonumber\\
 P(r=s\mid y)&=\tfrac12+\frac{\epsilon^2}{4}
 [1+\sech(2g)\cos^2y].
 \label{eq:agreement}
\end{align}
The agreement contrast and the total variation distance between the full pair tables at $y=0$ and $y=\pi/2$ both equal
\begin{equation}
 \Delta_{\rm agree}=\frac{\epsilon^2}{4}\sech(2g)>0.
 \label{eq:contrast}
\end{equation}
\end{proposition}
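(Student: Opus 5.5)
The proof is a direct computation from the pair integral in Eq.~\eqref{eq:pairintegral}, taking the moment identity Eq.~\eqref{eq:moment} (derived in the appendix) and its consequence Eq.~\eqref{eq:cos2} as given.

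First expand the product of responses:
\begin{equation*}
L_\epsilon(r\mid\lambda)L_\epsilon(s\mid\lambda)=\tfrac14\bigl[1+(r+s)\epsilon\cos\lambda+rs\epsilon^2\cos^2\lambda\bigr].
\end{equation*}
Integrate this against $\rho_{0y}$. The linear term vanishes because $\rho_{0y}$ is $\pi$-periodic, which follows from the $\pi$-periodicity of $R_g$ in Eq.~\eqref{eq:wrapped}, while $\cos(\lambda+\pi)=-\cos\lambda$. This gives $P(r,s\mid 0,y)=\frac14[1+rs\epsilon^2\E_{0y}\cos^2\lambda]$.

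The single-probe marginals follow by summing over the other outcome. For example, $P(r=+\mid y)=\sum_sP(+,s)=\frac12$, since the $rs$ terms cancel. The same holds for $s$.

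For the agreement probability,
\begin{equation*}
P(r=s)=P(+,+)+P(-,-)=\tfrac12+\tfrac{\epsilon^2}{2}\E_{0y}\cos^2\lambda.
\end{equation*}
To evaluate the moment, use $\cos^2\lambda=\frac12(1+\operatorname{Re}e^{2i\lambda})$. Taking the real part of Eq.~\eqref{eq:moment} at $x=0$ gives $\frac{1+\cos 2y}{2\cosh 2g}=\sech(2g)\cos^2y$, which recovers Eq.~\eqref{eq:cos2}. Substituting this value yields Eq.~\eqref{eq:agreement}.

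For the contrast, evaluate at $y=0$, where $\cos^2y=1$, and at $y=\pi/2$, where $\cos^2y=0$. The agreement probabilities differ by $\frac{\epsilon^2}{4}\sech(2g)$.

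For the total variation distance, note that each entry of the pair table changes by $\pm\frac{\epsilon^2}{4}\cdot\frac{1}{2}\sech(2g)$: the plus sign applies to the two agreeing entries and the minus sign to the two disagreeing entries. Summing the four absolute differences and halving therefore gives
\begin{equation*}
\tfrac12\cdot4\cdot\tfrac{\epsilon^2}{8}\sech(2g)=\tfrac{\epsilon^2}{4}\sech(2g).
\end{equation*}
This quantity is strictly positive because $\epsilon>0$ and $\sech>0$.

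The only substantive ingredient is the moment identity, which rests on the appendix. If it had to be checked independently, the approach would be to compute the Fourier coefficient $\int R_g(\lambda)R_g(\lambda-y)e^{2i\lambda}d\lambda$. Expanding $R_g(t)=\frac1\pi\bigl[1+2\sum_{k\ge1}e^{-2kg}\cos 2kt\bigr]$ reduces this integral to a finite sum of products of coefficients. Normalizing by $Z_{0y}=2R_{2g}(y)$ is then the main obstacle. I would handle it by writing $R_{2g}$ in closed form and simplifying with the identity $\cosh 4g-\cos 2y$.
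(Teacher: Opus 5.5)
Your proof is correct and is essentially the paper's own argument: the linear term drops by antipodal symmetry of the $\pi$-periodic source, Eq.~\eqref{eq:cos2} supplies $\E_{0y}\cos^2\lambda$, and the total variation follows because each of the four pair-table entries shifts by $\Delta_{\rm agree}/2$ in magnitude. One small correction to your optional last paragraph: the Fourier coefficient there is an infinite geometric series over index pairs summing to $-1$ (as in Appendix~\ref{app:moment}), not a finite sum. This does not affect the proof, since you take Eq.~\eqref{eq:moment} as given.
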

\begin{proof}
Given $\lambda$, agreement has probability $(1+\epsilon^2\cos^2\lambda)/2$. Integration using Eq.~\eqref{eq:cos2} gives the stated probability. Each of the two diagonal pair probabilities changes by $\Delta_{\rm agree}/2$, and each off-diagonal probability changes by its negative. Summing absolute differences and dividing by two gives Eq.~\eqref{eq:contrast}.
\end{proof}
For $g=0.2$, $\epsilon=0.2$, agreement is $0.5192500745$ at $y=0$ and $0.5100000000$ at $y=\pi/2$. Both terminal outcomes have been summed; the signal concerns an unselected earlier pair record. If the later axis is freely chosen after that record is archived, the stipulated extension violates its independence from this choice. The mechanism is multiplication of individually protected odd responses, which exposes an even harmonic of the source.

\subsection{Passive-readout rigidity}
Fix $x$ and positive finite $g_A,g_B$. Let a binary response be $L(+\mid\lambda)=l(\lambda)\in[0,1]$. Repetitions use the same response, unchanged hidden angle and source, and independent randomness conditional on $\lambda$.

\begin{theorem}[Two-use passive-readout rigidity]\label{thm:rigidity}
Suppose the one-probe probability is independent of every later axis $y$, with $P(+)=\ell$. Then, almost everywhere,
\begin{equation}
 l(\lambda)=\ell+f(\lambda),\qquad f(\lambda+\pi)=-f(\lambda).
 \label{eq:odd}
\end{equation}
The joint two-probe law is independent of every $y$ if and only if $f(\lambda)^2=c^2$ almost everywhere for a constant $c\geq0$. If $l$ is continuous on the circle, these conditions force $l=\ell$ everywhere. For measurable responses, nonconstant exceptions $f=c\sigma$, with antipodally odd $\sigma\in\{-1,+1\}$, are protected under every finite number of identical repetitions.
\end{theorem}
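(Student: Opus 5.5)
The plan is to reduce both protection conditions to a deconvolution statement on the circle. It rests on two structural facts. First, $\rho_{xy}$ in Eq.~\eqref{eq:rho} is $\pi$-periodic. Second, its $y$-dependence enters only through the factor $R_{g_B}(\lambda-y)$, whose Fourier coefficients never vanish. For a bounded measurable $h$, write $h=h_e+h_o$ with $h_e(\lambda)=\tfrac12[h(\lambda)+h(\lambda+\pi)]$ and $h_o(\lambda)=\tfrac12[h(\lambda)-h(\lambda+\pi)]$. By periodicity, $\int\rho_{xy}h\,d\lambda=\int\rho_{xy}h_e\,d\lambda$.

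\textbf{Key lemma.} If $\int\rho_{xy}h_e\,d\lambda=\kappa$ for every $y$, then $h_e=\kappa$ almost everywhere. To prove it, multiply by $Z_{xy}=\int R_{g_A}(\lambda-x)R_{g_B}(\lambda-y)\,d\lambda$ and set $G(\lambda)=R_{g_A}(\lambda-x)[h_e(\lambda)-\kappa]$. The hypothesis becomes $(G\ast\tilde R_{g_B})(y)=0$ for all $y$, where $\tilde R_{g_B}(t)=R_{g_B}(-t)=R_{g_B}(t)$. Both $G$ and $R_{g_B}$ are $\pi$-periodic, so only even harmonics $e^{2ik\lambda}$ occur. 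Summing the Poisson-kernel expansion of $p_g$ gives $R_g(t)=\pi^{-1}\sum_k e^{-2|k|g}e^{2ikt}$, so every even Fourier coefficient of $R_{g_B}$ is nonzero for finite $g_B$. Hence every Fourier coefficient of $G\in L^1$ vanishes, and uniqueness of Fourier coefficients in $L^1$ gives $G=0$ almost everywhere. Since $R_{g_A}>0$, this yields $h_e=\kappa$ almost everywhere. I expect this lemma to be the main technical step. The points needing care are the explicit nonvanishing coefficients, which is where finiteness and positivity of $g_B$ enter, and the $L^1$ uniqueness that justifies the ``almost everywhere'' qualifier. Everything else is algebra on top of it.

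\textbf{One probe.} Apply the lemma to $h=l$ with $\kappa=\ell$. This gives $l_e=\ell$ almost everywhere, and setting $f=l_o$ yields Eq.~\eqref{eq:odd}.

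\textbf{Two probes.} Since the marginals are fixed at $\ell$, the two-probe table is determined by $P(+,+\mid y)=\int\rho_{xy}l^2\,d\lambda$. Expanding $l^2=\ell^2+2\ell f+f^2$, the term $f^2$ is $\pi$-periodic and $f$ is antipodally odd, so $(l^2)_e=\ell^2+f^2$. The lemma then shows that $y$-independence holds exactly when $f^2$ is almost everywhere equal to a constant, which is $c^2$ with $c\geq0$. Conversely, if $f^2=c^2$ then $P(+,+)=\ell^2+c^2$ for every $y$, since $\int\rho_{xy}\,d\lambda=1$.

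\textbf{Continuous responses.} Suppose $l$ is continuous. Then $f$ is continuous and $f^2=c^2$ holds everywhere. If $c>0$, $f$ takes only the values $\pm c$, so by connectedness of the circle it has constant sign. This contradicts $f(\lambda+\pi)=-f(\lambda)$, so $c=0$ and $l=\ell$.

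\textbf{Measurable exceptions.} Let $f=c\sigma$ with $\sigma\in\{-1,+1\}$ antipodally odd, and require $\ell\pm c\in[0,1]$. For $N$ identical uses, an outcome string with $k$ pluses has conditional weight $(\ell+c\sigma)^k(1-\ell-c\sigma)^{N-k}$. Because $\sigma^2=1$, this weight reduces to $A+B\sigma$ with constants $A,B$. The $B\sigma$ term is odd and integrates to zero against $\rho_{xy}$, so each string probability equals $A$ for every $y$. The response is therefore protected under every finite number of repetitions.
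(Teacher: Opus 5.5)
Your proposal is correct and follows the paper's proof essentially step for step. You use the same Fourier-uniqueness argument, with the nonvanishing even coefficients of $R_{g_B}$ and positivity of $R_{g_A}$ forcing a constant antipodal average, applied first to $l$ and then to $l^2$; the same connectedness argument for continuous responses; and the same $y$-independent string probabilities for the sign exception, where your reduction of the weight to $A+B\sigma$ is a compact form of the paper's mixture $\tfrac12(\ell+c)^k(1-\ell-c)^{m-k}+\tfrac12(\ell-c)^k(1-\ell+c)^{m-k}$.
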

\begin{proof}
Write $A(\lambda)=R_{g_A}(\lambda-x)>0$. The Fourier series
\begin{equation}
 R_g(t)=\frac1\pi\sum_{k\in\mathbb Z}e^{-2g|k|}e^{2ikt}
 \label{eq:Fourier}
\end{equation}
has a nonzero coefficient at every even harmonic. Independence of the expectation of a bounded measurable $F$, with value $c_F$, is equivalent to
\begin{equation}
 \int_0^{2\pi}A(\lambda)[F(\lambda)-c_F]
 R_{g_B}(\lambda-y)\,d\lambda=0\quad\forall y.
 \label{eq:convolutionzero}
\end{equation}
Fourier uniqueness shows that the antipodal average of $A(F-c_F)$ vanishes almost everywhere. Since $A$ is positive and $\pi$-periodic,
\begin{equation}
 \frac{F(\lambda)+F(\lambda+\pi)}2=c_F
 \quad\text{almost everywhere}.
 \label{eq:average}
\end{equation}
Applying this to $F=l$ yields Eq.~\eqref{eq:odd}. Applying it to the two-probe event $(+,+)$, for which $F=l^2$, yields the constant antipodal average $\ell^2+f^2$. The remaining pair probabilities are determined by $\E l$ and $\E l^2$, establishing necessity and sufficiency.

If continuous $f$ satisfies $f^2=c^2>0$, connectedness forces a single sign, contradicting antipodal oddness. Thus $c=0$, and continuity extends the almost-everywhere identity to every point. For measurable $f=c\sigma$, the sign sets have equal probability under every antipodally invariant source. The condition $l\in[0,1]$ requires $c\leq\min(\ell,1-\ell)$. Every specified ordered string of $m$ outcomes containing $k$ plus signs has probability
\begin{align}
 &\tfrac12(\ell+c)^k(1-\ell-c)^{m-k}\nonumber\\
 &\quad+\tfrac12(\ell-c)^k(1-\ell+c)^{m-k},
 \label{eq:measurable}
\end{align}
which is independent of $y$.
\end{proof}
The measurable exception is necessary: an unrestricted constant-response conclusion would be false. Moreover, smooth functions can approximate a sign response, so the theorem does not supply a positive lower bound on the signal of every nonconstant continuous detector. The theorem concerns the source after both terminal outcomes are summed. Conditioning on a terminal outcome, allowing probe-induced source changes, correlated probe randomness, or context-dependent detector responses requires a different calculation.

\section{Numerical experiments}\label{sec:experiments}
The numerical study separates exact finite-state experiments, continuous-source Monte Carlo sampling, and deterministic quadrature. The simulators implement the specified instruments and endpoint laws. Analytical probabilities are used for comparison, rather than as a substitute for generating hidden states and detector outcomes. The accompanying source package contains the sampling code, retained trial records, tabulated results, and diagnostic discrepancies.

\subsection{Finite-state experiments}
Exact rational arithmetic computes $W$, $V$, and their matrix-invariance certificates. For a two-setting, binary-outcome controller with stopping, the number of syntactic policies of depth at most $d$ obeys $N_0=1$ and $N_d=1+2N_{d-1}^2$. Thus $N_3=723$, including policies that have identical effects but different decision trees. Every one of these policies has mass exactly one on $q$ for the closed family in Eq.~\eqref{eq:rcolumns}.

For the readable-state family in Eq.~\eqref{eq:adaptivewitness}, 304 of the 723 policies have nonunit mass; the extrema are $3/4$ and $5/4$. All 127 fixed setting sequences of length at most six have unit mass. The identity $qT_u=q$ proves the latter statement for all lengths independently of this enumeration. Exact dynamic programming extends the adaptive envelope to depth four, giving $[11/16,23/16]$; see Fig.~\ref{fig:adaptive}. For an unnormalized input $\mu$, the maximum is computed recursively as
\begin{equation}
 M_d(\mu)=\max\left\{\mu\one,\max_u\sum_oM_{d-1}(\mu K_{u,o})\right\},
 \label{eq:dp}
\end{equation}
with $M_0(\mu)=\mu\one$; replacing both maxima by minima gives the minimum. This optimizes over all admissible finite decision trees without individual enumeration at depth four. Attaining policies and their leaf masses are retained.

To test sensitivity to departure from the admissible input class, augment $A,B$ with $S_\eta=(1-\eta)I+\eta S$. We use $\eta\in\{0,0.01,0.05,0.1,0.2,0.5,1\}$ and horizons one through four. Direct calculation gives $qS_\eta T_A\one=1+\eta/2$. The defect-space dimension rises from one at $\eta=0$ to two at every tested positive value. The size of a finite-horizon operational error nevertheless varies continuously with $\eta$. Figure~\ref{fig:reset} shows the minimum and maximum archive probabilities when the optimized continuation is applied on branch 0 of a fair archive and branch 1 stops. At $\eta=1$ and depth four the mass range is $[1/4,9/4]$, giving archive probabilities $[1/5,9/13]$.

\begin{figure}[t]
\centering
\includegraphics[width=\columnwidth]{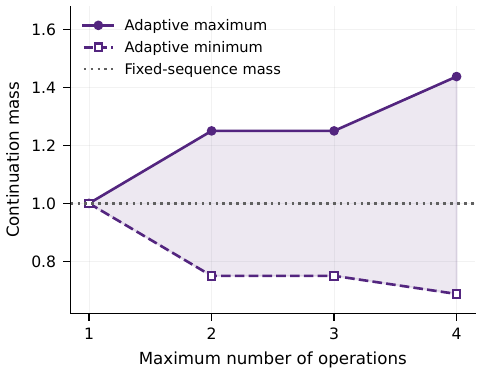}
\caption{Extremal continuation masses for the readable-state family $\{Q,A\}$ as a function of the maximum number of operations. Filled circles and the solid line denote maxima; open squares and the dashed line denote minima. The shaded region delineates the extrema, and the dotted line denotes the unit mass of every fixed sequence. Exact decision-tree optimization gives $[11/16,23/16]$ at depth four. Connecting lines are guides between integer horizons; no statistical uncertainty is present.}
\label{fig:adaptive}
\end{figure}

\begin{figure}[t]
\centering
\includegraphics[width=\columnwidth]{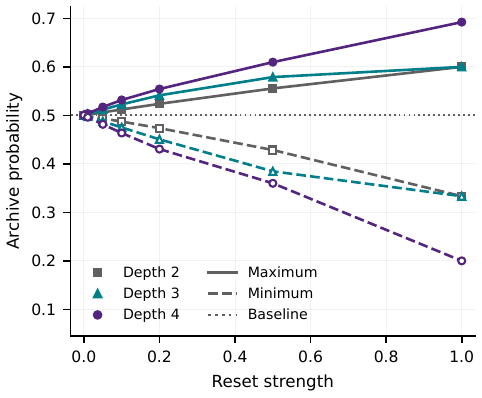}
\caption{Extremal archive probabilities for the family $\{A,B,S_\eta\}$, with $S_\eta=(1-\eta)I+\eta S$. The ordinate is $P(H=0)=m/(1+m)$, where $m$ is the optimized continuation mass on branch 0 of a fair archive and branch 1 stops. Colour and marker shape distinguish depths two, three, and four; solid and dashed lines connect exact maxima and minima, respectively. The dotted line is the baseline probability $1/2$. Depth one also gives $1/2$ for every reset strength and is omitted as redundant. Lines between sampled strengths are guides.}
\label{fig:reset}
\end{figure}

The reuse, refined-reset, correlated-product, and fair-bit examples reproduce their exact values and ranks. Three endpoint intervals independently attain Eq.~\eqref{eq:TVbound}; the archive-weight scan contains 3,003 points. These computations test examples and finite-horizon sensitivity. The all-depth conclusions follow from the invariant-space proofs and exact closure certificates, not extrapolation from a finite number of policies.

\subsection{Continuous and quantum experiments}
The primary Monte Carlo study contains 3.6 million independent trials across 23 conditions. Each retained trial contains a continuous $\lambda$, terminal outcomes $a,b$, and earlier probe records $r,s$. The axis sweep uses $g=0.2$, $\epsilon=0.6$, $y/\pi\in\{0,1/8,1/4,3/8,1/2\}$, and 200,000 trials per condition. Strength tests add $\epsilon\in\{0.2,0.4,0.8,1\}$ at $y=0,\pi/2$, also with 200,000 trials per condition. Width tests use $g\in\{0.05,0.1,0.5,1,2\}$, $\epsilon=0.6$, and the same two axes, with 100,000 trials per condition. All use $x=0$.

Sampling uses the proposal $R_g(\lambda-x)/2$. Draw independent uniform $U$ and fair $B\in\{0,1\}$, set
\begin{align}
 t&=\arctan[\tanh(g)\tan\pi(U-1/2)],\nonumber\\
 \lambda&=x+t+\pi B\pmod{2\pi},
\end{align}
and accept with probability $R_g(\lambda-y)/[\coth(g)/\pi]$. The accepted law is Eq.~\eqref{eq:rho}, with no angular discretization. Conditional terminal probabilities are the corresponding $p_g$ endpoint weight divided by the sum over that arm's two outcomes. Fresh conditional Bernoulli draws generate the probes. Independent PCG64 streams, derived from a recorded SeedSequence, separate source, terminal, and probe randomness.

Figures~\ref{fig:agreement} and~\ref{fig:marginals} compare the sampled agreement and individual-probe probabilities with their analytical values. Figure~\ref{fig:contrast} contains one contrast for each of the five strengths $\epsilon=0.2,0.4,0.6,0.8,1$, plotted at $\epsilon^2$; each contrast combines the $y=0$ and $y=\pi/2$ conditions. Binomial probabilities have 95\% Wilson intervals~\cite{Wilson1927}. Differences from independent conditions use $\widehat\Delta\pm1.96s_\Delta$, where
\begin{equation}
 s_\Delta^2=\frac{\hat p_0(1-\hat p_0)}{n_0}+\frac{\hat p_1(1-\hat p_1)}{n_1}.
\end{equation} These are pointwise intervals, not simultaneous coverage guarantees. At $g=\epsilon=0.2$, the measured contrast is $0.009025$, with interval $[0.005927,0.012123]$, compared with $0.009250$ analytically.

\begin{figure}[t]
\centering
\includegraphics[width=\columnwidth]{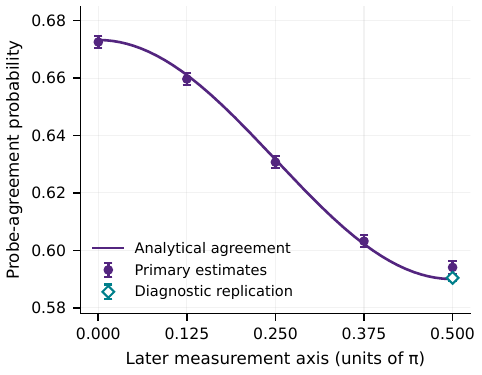}
\caption{Earlier probe-agreement probability as a function of the later axis, with abscissa $y/\pi$, at $x=0$, $g=0.2$, and $\epsilon=0.6$. The curve is $P(r=s\mid y)=1/2+\epsilon^2[1+\sech(2g)\cos^2y]/4$. Filled circles denote primary estimates from 200,000 trials per axis. The open diamond denotes the separately seeded diagnostic replication at $y=\pi/2$, with one million trials. Error bars are pointwise 95\% Wilson intervals; the diagnostic estimate does not replace the primary estimate.}
\label{fig:agreement}
\end{figure}

\begin{figure}[t]
\centering
\includegraphics[width=\columnwidth]{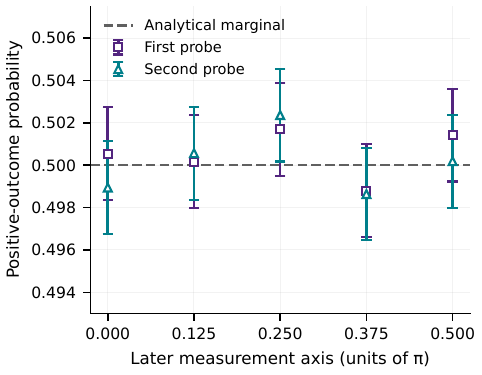}
\caption{Individual-probe probabilities for the same five primary conditions as Fig.~\ref{fig:agreement}, with abscissa $y/\pi$. Squares and triangles denote the estimates of $P(r=+\mid y)$ and $P(s=+\mid y)$, respectively, with pointwise 95\% Wilson intervals. The dashed line denotes the common analytical value $1/2$. A separate vertical scale resolves the uncertainty of the marginal estimates. The diagnostic replication is not included.}
\label{fig:marginals}
\end{figure}

\begin{figure}[t]
\centering
\includegraphics[width=\columnwidth]{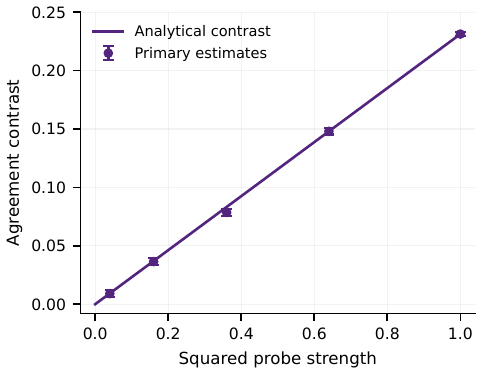}
\caption{Agreement contrast between $y=0$ and $y=\pi/2$ at $x=0$ and $g=0.2$, plotted against $\epsilon^2$. The curve is $\Delta_{\rm agree}=\epsilon^2\sech(2g)/4$. Each of the five points combines two independent primary conditions at one strength, with pointwise 95\% difference intervals. The $\epsilon=0.6$ point uses the primary samples in Fig.~\ref{fig:agreement}; the other strengths use separate simulations. The diagnostic replication is excluded.}
\label{fig:contrast}
\end{figure}

The primary orthogonal-axis point at $\epsilon=0.6$ gives agreement $0.594070$, compared with $0.590000$, a deviation of 3.70 theoretical binomial standard errors. Its axis-pair contrast is $0.078475$, compared with $0.083251$, or $-3.14$ theoretical binomial standard errors. Both observations are retained. A separately seeded diagnostic replication with one million trials gives $0.590442$, a 0.90-standard-error deviation. It was performed in response to the discrepancy and is reported separately; it is not substituted into the primary estimates. Including it, the archive contains 4.6 million source trials and 9.2 million earlier probe records. These simulations characterize the stipulated model, not a laboratory realization.

Periodic quadrature independently integrates the original endpoint kernels and probe responses on grids of 512, 2,048, 8,192, and 32,768 points. There are 60 equal-width and 32 unequal-width parameter cases per grid. At 512 points the largest equal-width relative normalizer error is $8.02\times10^{-4}$, demonstrating under-resolution of a narrow source. At grids of at least 2,048 points the tested second-moment errors are below $9.86\times10^{-16}$ and the largest terminal-pair error is $3.00\times10^{-15}$. The largest absolute normalizer residual in those grids is $7.28\times10^{-14}$, with relative residual $4.58\times10^{-15}$. These are observed floating-point comparisons, not rigorous discretization bounds or a common accuracy guarantee.

For the rigidity theorem we integrate $l_k(\lambda)=1/2+0.3\tanh(k\cos\lambda)$ for eight values $k\in\{1/2,1,2,4,8,16,32,64\}$ and compare with a constant response and $l_{\rm sign}=1/2+0.3\operatorname{sgn}(\cos\lambda)$. The 65-axis sweep in Fig.~\ref{fig:rigidity} uses unequal widths $g_A=0.2$, $g_B=0.7$ and $x=0.17$. Smooth responses retain a nonzero two-use variation, decreasing to $6.71\times10^{-4}$ at $k=64$. The sign exception is checked for lengths one through eight, grouped by plus-count, giving 2,860 ordered-string probability evaluations; its largest residual is $4.44\times10^{-16}$. This grouping uses exchangeability and is not enumeration of every binary string.

Finally, quantum-instrument matrix experiments use 20 seeded mixed qubit states, weak $X/Z$ measurements, and readable or unread amplitude damping. Two adaptive policies of horizon four are applied to each of the 20 states, giving 40 policy--state runs. Together with two alternative suffixes for each of 180 state--prefix cases, they produce 1,240 recorded branches. The largest forward/backward probability difference is $2.22\times10^{-16}$ and the prefix-preservation residual is $4.44\times10^{-16}$. The experiment also evaluates Eq.~\eqref{eq:calibration}, whose marginal agreement is imposed algebraically. It does not test a hidden factorization of that calibration. Appendix~\ref{app:diagnostics} presents the width-sensitivity, response-rigidity, quadrature-convergence, and sampling-residual plots. The underlying tables and code are supplied with the source package.

\section{Discussion}\label{sec:discussion}
The defect space and observation space address distinct requirements. The first certifies that later continuations preserve the weights of earlier records; the second tests whether context-dependent hidden preparations can be distinguished by the declared readouts. Their separate roles persist even when every instrument is normalized. Outcome-wise closure is essential for adaptive control and for readable environments, as the fixed-sequence counterexample demonstrates.

Quantum instruments provide a useful comparison~\cite{Chiribella2009}. Let $\omega\geq0$ with $\Tr\omega=1$, and let $\{\mathcal J_{u,o}\}_o$ be completely positive operations whose sum preserves trace. On a causal branch $h$ let $\mathcal J_h$ be their ordered composition and $E_h=\mathcal J_h^*(I)$, where $\Tr[\mathcal J(\omega)E]=\Tr[\omega\mathcal J^*(E)]$. Its probability is $n_h=\Tr(\omega E_h)$. Summing outcomes of the final operation replaces the sum of adjoint effects by $I$; repeating backwards proves normalization and preservation of any earlier prefix, including under classical feed-forward. This is the ordered-operator analogue of mass preservation and outcome-wise cone closure.

The process-tensor formalism describes multi-time experiments with environmental memory and correlated control operations~\cite{Pollock2018Framework}. Quantum Markov order can depend on the instruments used to probe a process~\cite{Taranto2019}, reinforcing the need to specify the accessible operations. Our criteria address normalization and context-independence of classical hidden-state records; they do not classify the Markov order of a quantum process.

A branchwise calibration can impose these observable probabilities on arbitrary hidden-history weights. Assign a nonnegative raw trajectory measure $R_h(dz)$ with finite mass $d_h=\int R_h(dz)$. Provided $d_h>0$ whenever $n_h>0$, define
\begin{equation}
 C_h=\frac{n_h}{d_h},\qquad P(h,dz)=C_hR_h(dz).
 \label{eq:calibration}
\end{equation}
Omit branches with $d_h=n_h=0$ and set $C_h=0$ when $n_h=0<d_h$. Integration gives $P(h)=n_h$ by construction. This is an operational calibration, not a derived detector mechanism. The complete-branch factor $C_h$ need not factor into a fixed family of hidden-state instruments. Applying the matrix test only to kernels generating $R_h$ would test the uncorrected law. A certificate for the calibrated construction requires sufficient hidden states at each cut, a joining rule, and all accessible refinements.

A backward-propagated likelihood or effect is not automatically a readable physical variable. If it is made accessible, its response, back-action, support, and influence on later control become additional components of the model. Likewise, a normalized reset can expose inadmissible subensembles to subsequent raw-weight devices. Calibration of coarse records does not establish consistency for either extension.

Adaptive conditional reconstruction is another model prescription. Wharton and Argaman's discussion~\cite{Wharton2020} includes a law of the form $\sum_A P'_a(A)P_{a,f(A)}(Q\mid A)$: a marginal law for an earlier outcome is combined with a conditional hidden law for the feedback-selected setting. This is not generally the same prescription as fixed-kernel multiplication and one global normalization in Eq.~\eqref{eq:archive}. The two constructions should be compared through their explicit source assignments and accessible records, rather than identified merely because both allow later settings to enter hidden distributions.

The passive spin extension fails the record-independence requirement under its stated unchanged-source assumptions. This finding neither excludes active or context-sensitive completions nor derives such a completion. The finite-state results apply to finite causal trees within a specified interaction family. Spatial no-signalling requires a locality specification and distant-marginal tests; causal loops and unrestricted cyclic wiring are outside this class. Joint measurements on independently sourced systems remain additional model-construction problems.

\paragraph{Data and code availability.}
The accompanying source package contains the Python simulators, exact finite-state calculations, per-trial sampled records, tabulated quadrature results, and figure-generation code, together with the LaTeX sources and vector figures required for compilation. Appendix~\ref{app:diagnostics} presents the sensitivity and convergence diagnostics.

\paragraph{Author contributions and use of AI tools.}
The author used OpenAI's ChatGPT to assist with the implementation of numerical verification code and the preparation of the manuscript. The author takes responsibility for the final arguments, computations, interpretations, and conclusions.

\bibliographystyle{quantum}
\bibliography{references}

\onecolumn
\raggedbottom
\appendix
\section{Source normalization and circular moments}\label{app:moment}
This derivation also fixes the normalization used by the numerical sampler. The wrapped-Cauchy convolution identity $p_{g_A}*p_{g_B}=p_{g_A+g_B}$ follows by multiplying Fourier coefficients $e^{-g|k|}$. Its antipodal version gives $Z_{xy}=2R_{g_A+g_B}(x-y)$ in Eq.~\eqref{eq:rho}. Hence the terminal pair probabilities are
\begin{equation}
 P(a,b\mid x,y)=\frac{p_{g_A+g_B}(\theta_{x,a}-\theta_{y,b}+\pi)}{Z_{xy}}.
 \label{eq:terminalpair}
\end{equation}
For equal widths this simplifies to
\begin{equation}
 P(a,b\mid x,y)=\tfrac14[1-ab\sech(2g)\cos(x-y)].
 \label{eq:terminalsimple}
\end{equation}
Finite positive $g$ therefore gives a visibility-reduced correlation, approaching $-\cos(x-y)$ only as $g\to0$. The numerical experiments test the finite-width model actually specified here.

For an independent moment derivation, define $\alpha=e^{-2g_A}$, $\beta=e^{-2g_B}$, $a=e^{2ix}$, $b=e^{2iy}$, and $t=a/b$. Apart from a common factor $\pi^{-2}$, the zeroth Fourier coefficient of the product in Eq.~\eqref{eq:rho} is
\begin{equation}
 D=\frac{1-(\alpha\beta)^2}{(1-\alpha\beta t)(1-\alpha\beta/t)}.
 \label{eq:Dfourier}
\end{equation}
The integral containing an additional factor $e^{2i\lambda}$ is obtained by selecting terms whose Fourier indices sum to $-1$. Splitting the resulting series at indices 0 and $-1$ gives the coefficient
\begin{equation}
 N=\frac{\beta b}{1-\alpha\beta/t}
   +\frac{\alpha a}{1-\alpha\beta t}.
\end{equation}
Therefore
\begin{align}
 \E_{xy}e^{2i\lambda}
 &=\frac{\alpha(1-\beta^2)a+\beta(1-\alpha^2)b}
 {1-\alpha^2\beta^2}\nonumber\\
 &=\frac{\sinh(2g_B)e^{2ix}+\sinh(2g_A)e^{2iy}}
 {\sinh[2(g_A+g_B)]}.
 \label{eq:unequalmoment}
\end{align}
All series converge absolutely for positive widths. Equal widths give Eq.~\eqref{eq:moment}; coincident axes require no limiting singularity.

For comparison, a contour derivation of the equal-width identity sets $q=e^{-2g}$, $z=e^{2i\lambda}$, $a=e^{2ix}$, and $b=e^{2iy}$. Up to a common factor the zeroth moment has integrand
\begin{equation}
 \frac{z\,dz}{(z-qa)(z-qb)(1-qz/a)(1-qz/b)}.
\end{equation}
For distinct $a,b$ the two interior residues are proportional to
\begin{align}
 A&=\frac{a}{(a-b)(1-q^2)(1-q^2a/b)},\nonumber\\
 B&=\frac{b}{(b-a)(1-q^2)(1-q^2b/a)}.
\end{align}
The first-$z$ moment divided by the zeroth is $(qaA+qbB)/(A+B)=q(a+b)/(1+q^2)$, again Eq.~\eqref{eq:moment}. Coincident axes follow by continuity for $g>0$.

\section{Measurable state spaces}\label{app:measurable}
For measurable hidden spaces, matrices are replaced by positive kernels, and measures are paired with measurable functions. With finite outcome sets and all required integrals finite, Eq.~\eqref{eq:effect} and its defect induction remain valid on algebraic spans of finite-policy effects. The word-effect proof likewise characterizes finite record strings. There need not be a finite basis or a uniform witness-depth bound. In particular, an environment whose relevant state space grows without bound does not inherit a fixed $n$-step certificate.

Continuous outcome spaces additionally require a specified function space, measurable kernel families, and a determining class of record events, with sums replaced by integrals. Such analytic closure cannot be inferred from a finite discretization. The spin-source rigidity theorem is proved directly by Fourier uniqueness; numerical integration only tests its stated examples.
\clearpage
\section{Numerical sensitivity and convergence diagnostics}\label{app:diagnostics}
Figures~\ref{fig:width}--\ref{fig:residuals} present the sensitivity and convergence diagnostics for the experiments in Section~\ref{sec:experiments}, using the archived observations and quadrature tables.

\noindent\appendixfigure{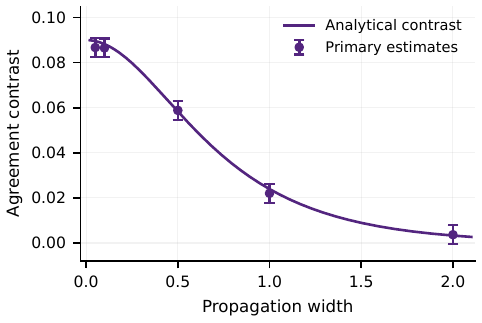}{Agreement contrast as a function of propagation width at $x=0$ and $\epsilon=0.6$. The curve is $\Delta_{\rm agree}=\epsilon^2\sech(2g)/4$. Each point combines the independent primary conditions at $y=0$ and $y=\pi/2$, with 100,000 trials per condition. Bars are pointwise 95\% difference intervals. The diagnostic replication is excluded.}{fig:width}\hfill
\appendixfigure{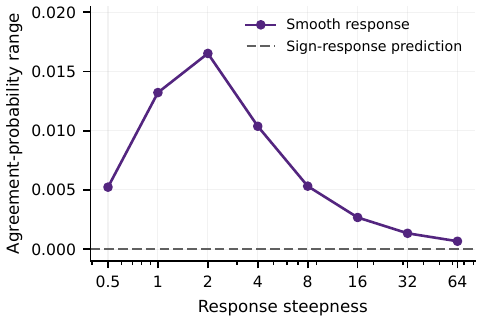}{Two-probe agreement variation for $l_k(\lambda)=1/2+0.3\tanh(k\cos\lambda)$. The ordinate is the maximum minus the minimum agreement probability over 65 later axes, with $g_A=0.2$, $g_B=0.7$, and $x=0.17$. Circles denote quadrature results for eight steepness values; connecting lines are guides on a logarithmic horizontal scale. The dashed line is the analytically zero variation of $l_{\rm sign}=1/2+0.3\operatorname{sgn}(\cos\lambda)$.}{fig:rigidity}

\vspace{1.5em}
\noindent\appendixfigure{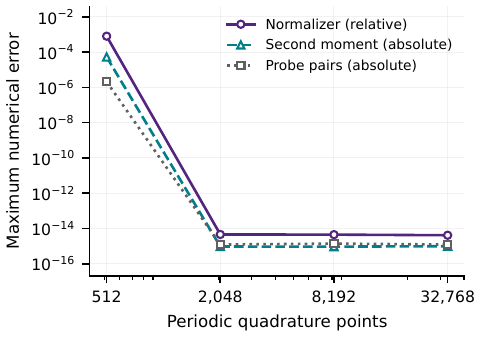}{Periodic-quadrature errors, maximized over the 60 equal-width parameter cases at each grid size. The three series show the relative normalizer error, absolute complex second-moment error, and maximum absolute error among the four probe-pair probabilities. Both axes are logarithmic. At 512 points the narrowest sources are under-resolved; at the larger grids the comparisons are dominated by floating-point precision. Lines connect tested grids and are not rigorous bounds. Unequal-width and terminal-pair comparisons are tabulated but not plotted.}{fig:convergence}\hfill
\appendixfigure{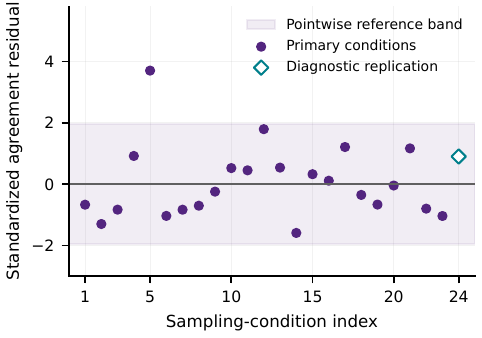}{Standardized agreement residuals $z=(\hat p-p)/\sqrt{p(1-p)/n}$, where $p$ is the analytical agreement probability. Circles show the 23 primary conditions in their archived order: five axis conditions, eight additional strength conditions, and ten width conditions. The open diamond at index 24 is the separately seeded diagnostic replication. The shaded band $|z|\leq1.96$ is a pointwise normal reference, without simultaneous coverage. The primary orthogonal-axis condition at index 5 remains visible at $z=3.70$.}{fig:residuals}

\end{document}